\titlespacing*{\section}{0pt}{0.6\baselineskip}{0.6\baselineskip}
\titlespacing*{\subsection}{0pt}{0.6\baselineskip}{0.6\baselineskip}
\newcommand{\myparagraph}[1]{\vspace*{-1.9ex}\paragraph*{\normalsize\bf{#1}}}
\newlength{\figurewidth}
\newlength{\smallfigurewidth}
\newtheorem{lemma}{Lemma}
\newtheorem{theorem}{Theorem}
\theoremstyle{definition}
\newcommand{\rank}{\mathsf{rank}}
\newcommand{\select}{\mathsf{select}}
\newcommand{\access}{\mathsf{access}}
\newcommand{\pred}{\mathsf{pred}}
\begin{document}

\title
{\vspace{-1.0cm}{\Large Run Compressed Rank/Select for Large
    Alphabets}\thanks{This work is supported by the Academy of Finland via grant
    1294143 and by the EU grant H2020-MSCA-RISE-2015 BIRDS690 No.\ 690941. The first author received funding from Fondecyt grant 3170534 and Basal Funds FB0001, Conicyt, Chile. Part of this work was developed during the Shonan Meeting 126 ``Computation over Compressed Structured Data''.}}


\author{\normalsize Jos{\'e} Fuentes-Sep{\'u}lveda$^1$, Juha K{\"a}rkk{\"a}inen$^2$, Dmitry Kosolobov$^2$, and Simon J.\ Puglisi$^2$\\[1ex]
\footnotesize $^1$ Department of Computer Science, University of Chile, Santiago, Chile\\[-0.8ex]
\footnotesize $^2$ Helsinki Institute for Information Technology,\\[-0.8ex]
\footnotesize Department of Computer Science, University of Helsinki, Helsinki, Finland\\[-3.8ex]}

\date{}
\maketitle
\thispagestyle{empty}

\begin{abstract}
Given a string of length $n$ that is composed of $r$ runs of letters from the
alphabet $\{0,1,\ldots,\sigma{-}1\}$ such that $2 \le \sigma \le r$, we describe
a data structure that, provided $r \le n / \log^{\omega(1)} n$, stores the
string in $r\log\frac{n\sigma}{r} + o(r\log\frac{n\sigma}{r})$ bits and supports
select and access queries in $O(\log\frac{\log(n/r)}{\log\log n})$ time and rank
queries in $O(\log\frac{\log(n\sigma/r)}{\log\log n})$ time. We
show that $r\log\frac{n(\sigma-1)}{r} - O(\log\frac{n}{r})$ bits are necessary for any such data
structure and, thus, our solution is succinct. We also describe a data structure
that uses $(1 + \epsilon)r\log\frac{n\sigma}{r} + O(r)$ bits, where
$\epsilon > 0$ is an arbitrary constant, with the same query times but
without the restriction $r \le n / \log^{\omega(1)} n$. By simple reductions to
the colored predecessor problem, we show that the query times are optimal in the
important case $r \ge 2^{\log^\delta n}$, for an arbitrary constant $\delta > 0$.
We implement our solution and compare it with the state of the art,
showing that the closest competitors consume 31--46\% more
space.
\end{abstract}


\section{Introduction}

Data structures supporting rank and select queries on sequences are fundamental to a
wide variety of topics in the theoretical and practical computer science, especially
as a component of more complex succinct and compressed data structures (we provide a
formal definition of rank and select queries below). Rank and select structures for
non-binary strings have been of interest since the advent of the FM-index~\cite{FerraginaManzini}
and the compressed suffix array~\cite{ggv}, and subsequent works on other indexes based on
the Burrows--Wheeler transform~\cite{BurrowsWheeler} (BWT) (e.g.,
see~\cite{MakinenNavarro}).

The simple run-length encoding of the BWT of a string allows to achieve, on
highly repetitive strings, compression ratios that are comparable to the
compression ratios achieved by the best reference-based schemes such as
LZ77~\cite{LZ77}. The crucial component required for the implementation of
compressed indexes based on BWT is the support of the rank (and,
sometimes, select) queries on the run-length encoded BWT. Many works have been
published on this and related topics (e.g.,
see~\cite{BelazzouguiNavarro,GolynskiMunroRao,MakinenNavarro,MakinenNavarroSirenValimaki}
and references therein), but none of them could achieve optimal (succinct)
run-length compressed space and optimal time simultaneously (recall that a data
structure is called \emph{succinct} if it occupies $Z + o(Z)$ space, where $Z$
is the information theoretic lower bound for its size). This line of research
became especially interesting after a recent paper by Gagie et
al.~\cite{GagieNavarroPrezza}, in which these authors proposed a method that
allows to avoid the $O(\frac{n}{\mathop{\mathrm{polylog}}(n)})$ bits of
redundancy that were required in all previous BWT-based indexes supporting the
full range of common search operations.

In this paper we describe a data structure that, given a string of length $n$
with $r$ runs of letters from the alphabet
$\{0,1,\ldots,\sigma{-}1\}$ such that $2 \le \sigma \le r \le n /
\log^{\omega(1)} n$, stores the string in $r\log\frac{n\sigma}{r} +
o(r\log\frac{n\sigma}{r})$ bits of
space (for brevity, $\log$ denotes the logarithm with base~2), and
supports select and access queries in $O(\log\frac{\log(n/r)}{\log\log n})$ time
and rank queries in $O(\log\frac{\log(n\sigma/r)}{\log\log n})$ time. Further, we
prove that $r\log\frac{n(\sigma - 1)}{r} - O(\log\frac{n}{r})$ bits are necessary for any such
encoding in the worst case and this implies that our data structure is
succinct. We also describe a version of this data structure that uses
$(1 + \epsilon)r\log\frac{n\sigma}{r} + O(r)$ bits, for arbitrary constant
$\epsilon > 0$, with the same query times but without the restriction
$r \le n / \log^{\omega(1)} n$. We then show, via
reductions to the so-called colored predecessor problem, that provided
$r \ge 2^{\log^\delta n}$ for an arbitrary constant $\delta > 0$, our
rank, select, and access times are optimal, even if $\sigma =
2$. We also describe a generic version of our solution with a parameter
controlling time-space trade-offs.

We have implemented our data structure and
experiments show that our closest competitors
consume 31\%--46\% more space; this small space usage, however, comes at the price of a noticeable slowdown
in query time on some inputs.

This paper is organized as follows. In the next section we first discuss some
auxiliary tools and then describe the main data structure.
The subsequent section is devoted to time and space optimality
considerations. The last section presents a practical implementation of these
ideas and experiments.

\myparagraph{Preliminaries.}
Let $s$ be a string of length $n$. The letters of $s$ are denoted
$s[0]$, $s[1]$, $\ldots$, $s[n{-}1]$ and $s[i..j]$ denotes $s[i]s[i{+}1]\cdots s[j]$.
Our notation for arrays is similar: e.g., $a[0..n{-}1]$ is an
array of length $n$. A \emph{run} in $s$ is
a substring $s[i..j]$ in which all letters are equal. For any $i,j$, the set
$\{k\in \mathbb{Z} \colon i \le k \le j\}$ is denoted
$[i..j]$.

For a string or an array $B$, the query $\access(i, B)$ merely returns $B[i]$,
the query $\rank_c(i, B)$ returns the number of letters $c$ in
$B[0..i]$, the query $\select_c(i, B)$ returns the position of the
$i$th letter $c$ in $B$ or returns $-1$ if either $i < 1$ or $B$ contains less
than $i$ letters $c$ (in particular, $\select_c(0, B) = -1$). We omit
the second parameter $B$ and write simply $\rank_c(i)$, $\select_c(i)$,
$\access(i)$ when $B$ is clear from the context.

\section{Data Structure}

We briefly describe the so-called \emph{Elias--Fano scheme}~\cite{Elias,Fano},
which is of fundamental importance for succinct data structures and which we
use in our construction below.

\myparagraph{Elias--Fano scheme.}
Consider a bit array of length $n$ that contains exactly $k$ ones. In the
Elias--Fano scheme we split the array into $k$ buckets of lengths
$\lceil\frac{n}{k}\rceil$ (the last bucket can be smaller), concatenate the
unary encodings of the numbers of ones in the buckets, thus obtaining a bit
array $B$ of length $2k$ (e.g., the bit array $001101$ encodes three buckets
containing, respectively, $2$, $0$, and $1$ ones), and finally, store
consecutively the relative positions of the ones inside the buckets in an array
$A[0..k{-}1]$. $A$ occupies $k\lceil\log\frac{n}{k}\rceil$ bits, and the whole
encoding takes $k\log\frac{n}{k} + O(k)$ bits in total. In addition, $B$ is
equipped with the following data structure, adding $o(k)$ bits.

\begin{lemma}[{see \cite{Jacobson}}]
Any bit array of length $n$ has an encoding that occupies $n + o(n)$ bits and
supports the queries $\rank_0$, $\rank_1$, $\select_0$, $\select_1$ in $O(1)$
time.
\label{RankSelectBitArray}
\end{lemma}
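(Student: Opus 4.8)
The plan is to store the bit array $B$ itself verbatim, using exactly $n$ bits, and to attach to it a handful of auxiliary structures, each of size $o(n)$. Two of the four queries come for free from the others: $\rank_0(i,B) = i+1-\rank_1(i,B)$, and $\select_0$ is obtained by the same machinery as $\select_1$ after replacing every chunk of $B$ that we read by its bitwise complement (complementing $O(\log n)$ bits costs $O(1)$ on a word RAM). So I only really need constant-time $\rank_1$ and constant-time $\select_1$, and I would build these following the classical constructions of Jacobson and of Clark and Munro.

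For $\rank_1$ I would use two levels of blocking plus one universal table. Cut $[0..n{-}1]$ into \emph{superblocks} of $\log^2 n$ positions and tabulate, for each superblock, the number of $1$s strictly to its left; this is $O(n/\log^2 n)$ values of $O(\log n)$ bits, i.e. $O(n/\log n)$ bits. Cut each superblock into \emph{blocks} of $\frac{1}{2}\log n$ positions and tabulate, for each block, the number of $1$s before it inside its own superblock; this is $O(n/\log n)$ values of $O(\log\log n)$ bits, i.e. $o(n)$ bits. Finally, precompute a table that, for every bit string $w$ of length $\frac{1}{2}\log n$ and every offset $j$, returns the number of $1$s in $w[0..j]$; since there are only $2^{(\log n)/2}=\sqrt n$ such strings, the table occupies $O(\sqrt n\,\log n\,\log\log n)=o(n)$ bits, and a lookup is $O(1)$ (the length-$\frac{1}{2}\log n$ window of $B$ is read in one memory access). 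Then $\rank_1(i,B)$ is the sum of the stored superblock count, the stored block count, and one table lookup on the block containing position $i$ --- constant time. For $\rank_0$ use the identity above.

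The delicate part, and the step I expect to be the real obstacle, is making $\select_1$ run in $O(1)$ \emph{worst-case} time while still spending only $o(n)$ extra bits; here I would use the three-level Clark--Munro scheme. Partition the $1$s into \emph{groups} of $\log^2 n$ consecutive ones and store the position in $B$ of the first one of each group ($O(n/\log n)$ bits). Look at the length of the interval of $B$ spanned by a group: if it is at least $\log^4 n$, store the positions of all of the group's ones explicitly; the spanned intervals of distinct groups are pairwise disjoint, so there are at most $n/\log^4 n$ such ``long'' groups and the total cost is $O(n/\log n)$ bits, with a query being a single array access. If the span is below $\log^4 n$ (a ``short'' group), every position inside it needs only $O(\log\log n)$ bits, and I recurse once more: subdivide its ones into \emph{subgroups} of $(\log\log n)^2$ consecutive ones, store the relative position of each subgroup's first one, and then handle a subgroup either by explicit storage of all its positions when its span reaches $\frac{1}{2}\log n$ (again these subgroup spans are globally pairwise disjoint, so ``long'' subgroups are too few to matter) or, when its span is below $\frac{1}{2}\log n$, by a universal table that locates the $j$-th one in any bit string of length $\frac{1}{2}\log n$. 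A $\select_1$ query then computes the right group, possibly the right subgroup, and finishes with one array lookup or one table lookup --- a constant number of steps. The only genuine work is the parameter bookkeeping: one has to verify that each of the boundedly many auxiliary arrays and the universal table are individually $o(n)$ bits, the non-obvious ingredient being exactly the disjointness argument that keeps the number of ``long'' groups and subgroups (whose ones are listed verbatim) small. With the choices above this goes through, and together with the analogous $\select_0$ structure we obtain the claimed $n+o(n)$-bit, $O(1)$-time encoding.
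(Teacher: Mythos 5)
The paper does not prove Lemma~\ref{RankSelectBitArray}; it is invoked as a known result of Jacobson (for $\rank$) and, implicitly, Clark and Munro (for constant-time $\select$ in $o(n)$ redundancy). Your reconstruction of that classical construction is correct and complete at the level of detail one would expect: the superblock/block decomposition with a $\sqrt{n}$-row lookup table gives $O(1)$ $\rank_1$ in $o(n)$ extra bits, and the three-level long/short dispatch for groups and subgroups, with the disjointness argument bounding the number of ``long'' groups and subgroups whose positions are listed verbatim, gives $O(1)$ $\select_1$ in $o(n)$ extra bits; $\rank_0$ follows from the identity and $\select_0$ from the symmetric construction. One small imprecision worth tightening: the phrase that $\select_0$ is ``obtained by the same machinery as $\select_1$ after replacing every chunk of $B$ by its bitwise complement'' reads as if the $\select_1$ directory (group first positions, long/short classification, subgroup arrays) could be reused for zeros, which it cannot, since those directories are keyed to the positions of the ones; only the universal lookup table can be shared via complementation. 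Your closing sentence (``the analogous $\select_0$ structure'') already says the right thing, so this is a matter of wording rather than a gap, but the earlier sentence should be rephrased to avoid suggesting the directories themselves transfer.
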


Using the data structure of Lemma~\ref{RankSelectBitArray}, one can compute the
bucket containing the $i$th one as $\select_{0}(i,B) - i + 1$. The relative
position of this one in the bucket is stored explicitly in $A[i]$. Therefore,
any $\select_1(i)$ query on the bit array can be answered in $O(1)$
time. We further enhance this basic scheme as follows.

\begin{lemma}[{see~\cite[Th. 14]{BelazzouguiNavarro}}]
Given a set $S \subset [0..u]$ of size $k$, there is a data structure that
occupies $O(k\log\frac{u}{k})$ bits and supports, for any given $x$, predecessor
queries $\max\{y \in S \colon y < x\}$ in $O(\log\log_w\frac{u}{k})$ time,
where $w$ is the size of machine word.
\label{PredecessorLemma}
\end{lemma}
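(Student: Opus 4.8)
The plan is to reduce, using the Elias--Fano device already described, the predecessor problem over the full universe to a predecessor problem over a much smaller universe, and then to build a \emph{bit}-efficient van~Emde~Boas--style structure on that residual instance, halting the recursion at the machine-word scale so that the cost becomes $O(\log\log_w(\cdot))$ rather than $O(\log\log(\cdot))$.

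\emph{Step 1 (Elias--Fano reduction).} Partition $[0..u]$ into $k$ buckets of width $v:=\lceil(u{+}1)/k\rceil$ and write each $y\in S$ as $(b,o)$ with $b=\lfloor y/v\rfloor$ and $o=y\bmod v$. Encode the bucket occupancies in the $2k$-bit array $B$ of Lemma~\ref{RankSelectBitArray}, and store, for each bucket, its offsets in sorted order in a contiguous block of an array $A$ using $k\lceil\log v\rceil$ bits, the block boundaries being recovered from $B$ by $\rank/\select$ queries. Given a query $x=(b,o)$, a constant number of queries on $B$ yields the range of ranks of the elements lying in bucket $b$ and the rank $m{-}1$ of the last element lying in a bucket strictly below $b$. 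Hence in $O(1)$ time the original query becomes either ``strict predecessor of $o$ among the (at most $k$) offsets of bucket $b$'', an instance over the universe $[0..v)$ with $v=O(u/k)$, or, if that fails, ``return the element of rank $m{-}1$'', reconstructed from $A$ and $B$ in $O(1)$ time. So far the space is $O(k)+k\lceil\log(u/k)\rceil$ bits.

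\emph{Step 2 (residual structure).} It remains to support predecessor over $[0..v)$, per bucket, in $O(\log\log_w v)$ time and $O(k\log v)$ bits in total (bits, not machine words). I would use a van~Emde~Boas layout on the $\lceil\log v\rceil$-bit offsets: split each offset into two halves of $\approx\frac12\log v$ bits, recurse on the distinct high halves and, within each occurring high half, on the corresponding low halves, stopping as soon as the fragment length drops to $O(\log w)$, where a micro-block of keys fits in $O(1)$ machine words and a broadword/fusion-tree step answers in $O(1)$ time; the recursion depth is then $O(\log\log_w v)$. To fit this in $O(k\log v)$ bits rather than $O(k)$ words I would (i) keep only the non-empty vEB clusters, addressed through (monotone) perfect hashing, and (ii) apply the y-fast-trie sampling trick, storing only $O(c_b/\log v)$ representatives per bucket of $c_b$ elements in the vEB layers and the remaining keys in explicit sorted micro-blocks searched by a packed/fusion step. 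The micro-blocks account for the dominant $k\lceil\log v\rceil$ bits; the sampled layers and hash tables contribute $O(k\log\log v)+O(k)$ further bits, which is absorbed into $O(k\log(u/k))$.

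The delicate part is entirely Step~2: matching the space bound in \emph{bits} while keeping query time $\log\log_w v$. This forces the vEB recursion to be truncated at word length with all base cases and micro-blocks handled by broadword/fusion-node techniques; it forces the cluster storage to be pointerless enough (monotone minimal perfect hashing, in the spirit of ``searching a sorted table with $O(1)$ accesses'') that the auxiliary layers telescope to $o(k\log v)+O(k)$ bits even when $u$ is close to $k$ and hence $\log(u/k)$ is tiny; and it needs only the easy observations that a bucket of width $v$ holds at most $v$ elements (so the residual problem is $O(1)$-time trivial once $v=O(1)$) and that $\log_w\log v=O(\log\log_w v)$ up to an additive constant. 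This is precisely Theorem~14 of~\cite{BelazzouguiNavarro}, to which I would ultimately defer for the word-RAM bookkeeping.
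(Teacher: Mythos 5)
The paper gives no proof of this lemma: it is cited verbatim as Theorem~14 of Belazzougui and Navarro~\cite{BelazzouguiNavarro} and used as a black box, so there is no in-paper argument to compare against. Your sketch correctly identifies the two standard ingredients --- an Elias--Fano bucketing step to drop the residual universe to width $O(u/k)$, followed by a bit-efficient van~Emde~Boas recursion truncated at word scale with fusion-node/broadword base cases and pointerless cluster storage --- and it is a fair high-level reconstruction of how such a bound is usually obtained. But, as you note yourself, the decisive part is Step~2: simultaneously meeting the $O(k\log\frac{u}{k})$-bit budget and the $O(\log\log_w\frac{u}{k})$ recursion depth requires the careful cluster bookkeeping (keeping only non-empty clusters via monotone perfect hashing, sampling into the vEB layers, and handling base cases with packed/fusion steps) that you explicitly defer to the very same citation. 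So the proposal is not an independent proof but an informed gloss that points back to~\cite{BelazzouguiNavarro} --- which is precisely what the paper does, so nothing is lost, but nothing is gained either. If you wanted a self-contained argument, the place to invest effort is the space accounting of Step~2 (the cluster index and the truncation criterion), since a naive vEB over $k$ offsets of $\log\frac{u}{k}$ bits already risks spending $\Theta(k\log\frac{u}{k})$ bits on pointers alone, and that is exactly the overhead the cited theorem is designed to avoid.
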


\begin{lemma}
Let $\tau \ge 1$ be a ``sampling'' parameter.
Any bit array of length $n$ containing exactly $k$ ones has an encoding that
occupies $(1 + \frac{1}{\tau})k\log\frac{n}{k} + O(k)$ bits and supports
$\select_1$ queries in $O(1)$ time and $\rank_0$/$\rank_1$ queries in
$O(\log\tau + \log\frac{\log(n/k)}{\log\log n})$ time.
\label{FullEliasFano}
\end{lemma}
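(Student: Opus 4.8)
The plan is to start from the basic Elias--Fano encoding recalled above --- the length-$2k$ bit array $B$ of unary bucket counts, equipped with the structure of Lemma~\ref{RankSelectBitArray}, together with the array $A[0..k{-}1]$ of $\lceil\log\frac nk\rceil$-bit low parts --- and to add a lightweight index accelerating rank. Let $x_1 < x_2 < \cdots < x_k$ be the positions of the ones; this encoding uses $k\log\frac nk + O(k)$ bits and answers $\select_1$ in $O(1)$ time, so each $x_\ell = \select_1(\ell)$ is available in $O(1)$ time and hence a comparison of $x_\ell$ with a given $i$ costs $O(1)$. Since $x_1,\ldots,x_k$ is increasing, $\rank_1(i)$, the number of $x_\ell \le i$, could be found by binary search on $\ell$ in $O(\log k)$ time, and the whole task is to reduce this to the claimed bound. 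We may assume $k \le n/2$ (otherwise run the construction on the zeros and return $i + 1 - \rank_0(i)$), so that $\log\frac nk \ge 1$, and that the machine word has $w = \Omega(\log n)$ bits.

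To build the index, fix $\tau' = \lceil C\tau\rceil$, where $C$ is the constant hidden in Lemma~\ref{PredecessorLemma}; if $\tau' > k$, keep the plain $O(\log k)$ binary search (note $\log k = O(\log\tau)$ in this case), so assume $1 \le \tau' \le k$. Using Lemma~\ref{PredecessorLemma}, build a predecessor structure $P$ over the sampled set $S = \{x_{\tau'}, x_{2\tau'}, \ldots\} \subseteq [0..n)$, which has $\lfloor k/\tau'\rfloor$ elements; this costs $O(\frac{k}{\tau'}\log\frac{n\tau'}{k})$ bits, and since $\log\frac{n\tau'}{k} = \log\frac nk + \log\tau'$ this splits as $O(\frac{k}{\tau'}\log\frac nk) + O(\frac{k}{\tau'}\log\tau')$. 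The second summand is $O(k)$ because $\log\tau' \le \tau'$, and, by the choice of the constant in $\tau'$, the first is at most $\frac1\tau k\log\frac nk$; with the $o(k)$ bits of the structure of Lemma~\ref{RankSelectBitArray} on $B$ the encoding thus uses $(1 + \frac1\tau)k\log\frac nk + O(k)$ bits, as claimed.

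To answer $\rank_1(i)$, I would first query $P$ for $y = \max\{z \in S \colon z < i+1\}$, which by Lemma~\ref{PredecessorLemma} takes $O(\log\log_w\frac{n\tau'}{k})$ time. Writing $y = x_{j\tau'}$ (with $j = 0$ when $P$ returns nothing), we have $x_1,\ldots,x_{j\tau'} \le i$ while every sample beyond $x_{j\tau'}$ exceeds $i$, so it remains to count the indices $\ell$ with $x_\ell \le i$ in the range from $j\tau' + 1$ up to the next sample's index (or to $k$ if none) --- fewer than $\tau'$ of them, with $x_\ell$ increasing --- which a second binary search on $\ell$, each step a constant-time comparison of $x_\ell$ with $i$, does in $O(\log\tau')$; then $\rank_1(i) = j\tau' + c$ with $c$ the count obtained, and $\rank_0(i) = i + 1 - \rank_1(i)$. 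It remains to see that these times add up to the statement: as $w = \Omega(\log n)$ we have $\log\log_w\frac{n\tau'}{k} = O(\log\frac{\log(n\tau'/k)}{\log\log n})$, and combining this with $\log(a+b) \le 1 + \log a + \log b$ (for $a,b \ge 1$) and $\log\log\tau' = O(\log\tau)$ gives total time $O(\log\tau + \log\frac{\log(n/k)}{\log\log n})$, as required. I expect the delicate points to be exactly this last bookkeeping --- in particular that inflating the predecessor universe by the factor $\tau'$ only costs the additive $\log\tau'$ --- and the observation that a block of $\tau'$ consecutive ones may straddle several buckets of $B$, so that the in-block search must go through $\select_1$ rather than simply binary searching inside $A$.
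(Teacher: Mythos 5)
Your proof is correct, but it reaches the bound by a genuinely different route than the paper. The paper keeps the sampling \emph{local to Elias--Fano buckets}: for each bucket of size $\lceil n/k\rceil$ holding $k_b > \rho$ ones it samples every $\lceil\rho\rceil$th relative position inside that bucket and builds a separate predecessor structure over the \emph{bucket-sized} universe $[0..\lceil n/k\rceil)$. Because $k_b \ge \rho$ in those buckets, Lemma~\ref{PredecessorLemma} then gives query time $O(\log\log_w\frac{n/k}{k_b/\rho}) = O(\log\frac{\log(n/k)}{\log\log n})$ directly, with no extra $\log\tau$-dependent term to absorb; the price is the bookkeeping of many small structures, stored consecutively and located via an auxiliary length-$k$ bit vector marking the sampled entries of $A$. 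You instead build a \emph{single global} predecessor structure over the $\tau'$-spaced subsampling $\{x_{\tau'}, x_{2\tau'},\ldots\}$ of the one-positions, with universe $[0..n]$. This is structurally simpler (no per-bucket structures, no marking bit vector, sample indices are just multiples of $\tau'$), but the universe-to-size ratio becomes $n\tau'/k$ rather than $n/k$, which is exactly why you need the extra ``$\log(a+b)\le 1+\log a+\log b$'' step to split off the $O(\log\tau)$ term; that bookkeeping is correct, and both answers fit inside the lemma's stated bound $O(\log\tau + \log\frac{\log(n/k)}{\log\log n})$. You also correctly observe the other consequence of going global: a block of $\tau'$ consecutive ones may straddle several Elias--Fano buckets, so the final binary search must compare via $O(1)$-time $\select_1$ calls rather than by reading the array $A$ directly inside one bucket, whereas the paper's in-bucket binary search works on a contiguous slice of $A$. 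Your space accounting matches the paper's ($(1+\frac1\tau)k\log\frac nk+O(k)$), and your handling of the degenerate case $\tau'>k$ by falling back to an $O(\log k)=O(\log\tau)$ binary search is the right safeguard. The $k\le n/2$ assumption you introduce is harmless but not actually needed; the paper's argument (and yours, after splitting $\log\frac{n\tau'}{k}=\log\frac nk+\log\tau'$) goes through without it.
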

\begin{proof}
Since $\select_1$ was discussed above and $\rank_0(i) = i - \rank_1(i) + 1$, it
suffices to consider $\rank_1$. Let $b = \lfloor i / \lceil n/k\rceil\rfloor +
1$. Obviously, $i$ lies in the $b$th bucket and, hence, $d = \select_1(b - 1, B)
- b + 2$ ones from other buckets precede $i$. Thus, it remains to count the
number of ones before position $i$ that also lie in the $b$th bucket.

It is easy to see that there are $\ell = \select_1(b, B) - \select_1(b - 1, B) -
1$ ones in the $b$th bucket and their relative positions are stored in the
subarray $A[d .. d + \ell - 1]$. Denote $\rho = c\tau$, where $c$ is a positive
constant determined below. If $\ell \le \rho$, we use binary search
to count in $O(\log\rho) = O(\log\tau)$ time the number of subarray elements that precede
the relative position of $i$ inside the $b$th bucket, i.e., precede $i \bmod
\lceil n/k\rceil$. For the case $\ell > \rho$, we sample every
$\lceil\rho\rceil$th element of the subarray and put them in the data structure
from Lemma~\ref{PredecessorLemma} occupying
$O(\frac{k_b}{\rho}\lceil\log\frac{n}{k}\rceil)$
bits, where $k_b$ is the number of ones in the $b$th bucket, which allows us
to count the number of sampled predecessors of $i \bmod \lceil n/k\rceil$ in
$O(\log\frac{\log(n/k)}{\log\log n})$ time; the $\lceil\rho\rceil{-}1$
non-sampled elements following the found sampled predecessor are processed again
by binary search. We store these data structures consecutively and
locate the required one using an additional bit array of length
$k$ that marks the sampled elements of $A$; the details are omitted as they are
straightforward. The overall space can be estimated as
$k\log\frac{n}{k} + O(k + \frac{k}{\rho}\log\frac{n}{k})$, which is
$(1 + \frac{1}{\tau})k\log\frac{n}{k} + O(k)$ for an appropriately chosen
constant $c$ in $\rho = c\tau$.
\end{proof}

For example, when $\tau = \log n$, Lemma~\ref{FullEliasFano} gives us a data
structure that occupies $k\log\frac{n}{k} + O(k)$ bits and answers rank queries
in $O(\log\log n)$ time.

\myparagraph{The main data structure.}

Let us consider a string $s$ of length $n$ that can be represented as a
concatenation of $r$ runs of letters from the alphabet $[0..\sigma{-}1]$ such
that $2 \le \sigma \le r$. Denote by $n_0, n_1, \ldots, n_{\sigma-1}$ the
number of runs of the letters $0, 1, \ldots, \sigma{-}1$, respectively; note
that $n_0 + n_1 + \cdots + n_{\sigma-1} = r$. For $c \in [0..\sigma{-}1]$ and $i
\in [1..n_c]$, let $\ell_{c,i}$ denote the length of the $i$th run of the letter
$c$. We encode $s$ in the following components (see a clarifying example in
Fig.~\ref{fig:example}):
\begin{itemize}
\item a bit array $R[0..n{-}1]$ such that $R[i] = 1$ iff $s[i] \ne s[i{+}1]$ or
  $i = n - 1$;
\item a string $H[0..r{-}1]$ such that $H[i] = s[\select_1(i + 1, R)]$;
\item a bit array $C[0..r{+}\sigma{-}1]$ that is the concatenation of the unary
  encodings for the number of runs of each letter;
\item an integer array $S[0..r{-}1]$ that stores the following numbers (in this
  order):\\ $ \ell_{0,1}, \ell_{0,2}, \ldots, \ell_{0,n_0},\ \ \ell_{1,1},
  \ell_{1,2}, \ldots, \ell_{1,n_1},~~~\ldots~~~,\ell_{\sigma-1,1},
  \ell_{\sigma-1,2}, \ldots, \ell_{\sigma-1,n_{\sigma-1}}.  $
\end{itemize}

\begin{wrapfigure}{r}{0.45\textwidth}
\vspace{-5.7ex}
$$
\begin{array}{rl}
s = & aaaabbbadddddaaaaaddbaaaa, \\
R = & 0001001100001000010110001, \\
H = & abadadba, \\
C = & 000010011001, \\
S = & 4,1,5,4,3,1,5,2.
\end{array}
$$
\vspace{-3ex}
\caption{\footnotesize Here $\sigma = 4$ and, for the readability, $a, b, c, d$
  denote the letters $0, 1, 2, 3$.}
\vspace{-2ex}\label{fig:example}
\end{wrapfigure}

Thus, $R$ marks the last letter of every run in $s$, $H$ stores these letters in
the corresponding order, $C$ encodes the number of runs of each letter, and $S$
stores the run lengths grouped by letters.

Let us choose a positive ``sampling'' parameter $\rho \ge 1$ that will regulate
time-space trade-offs for the data structure. Our goal is to support the queries
$\select_c$ and $\access$ in $O(\rho + \log\frac{\log(n/r)}{\log\log n})$
time and the query $\rank_c$ in $O(\rho + \log\frac{\log(n\sigma/r)}{\log\log n})$
time (see optimality considerations below).

We encode $R$ in $(1 + \frac{1}{2^\rho})r\log\frac{n}{r} + O(r)$ bits as in
Lemma~\ref{FullEliasFano} with $\tau = 2^\rho$. The array $C$ is encoded in $O(r)$ bits
as in Lemma~\ref{RankSelectBitArray}. The string $H[0..r{-}1]$
is stored in the following data structure of
Belazzougui and Navarro~\cite{BelazzouguiNavarro} (slightly reformulated).

\begin{lemma}[{see~\cite[Th. 6]{BelazzouguiNavarro}}]
Let $\rho \ge 1$ be a ``sampling'' parameter.
Any string of length $r$ over the alphabet $[0..\sigma{-}1]$ such that $2 \le
\sigma \le r$ has an encoding that occupies $(1 + \frac{1}{\rho})r\log\sigma + o(r\log\sigma) + O(r)$ bits
and supports $\access$ in $O(\rho)$ time, $\select_c$ in $O(1)$ time, and
$\rank_c$ in $O(\log\frac{\log\sigma}{\log\log n})$ time.
\label{StringRankSelect}
\end{lemma}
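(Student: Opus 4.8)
The plan is to combine, as in the classical reductions of Golynski, Munro, and Rao and of Belazzougui and Navarro, a block decomposition of the string with a succinct permutation representation inside each block; since the statement is a mild reparametrization of \cite[Th.~6]{BelazzouguiNavarro}, one could simply invoke that result after matching parameters, but it is worth recording the construction because it is where the $\log\tfrac{\log\sigma}{\log\log n}$ term comes from. First I would dispose of the regime $\sigma=\mathrm{polylog}(n)$ separately, using the classical small-alphabet (multiary-wavelet-tree) representation, which answers $\access$, $\rank_c$ and $\select_c$ all in $O(1)$ time within $r\log\sigma+o(r\log\sigma)$ bits (note $\log\tfrac{\log\sigma}{\log\log n}=O(1)$ in this range). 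Assume henceforth $\sigma=\log^{\omega(1)}n$. Cut the string into $\lceil r/\sigma\rceil$ blocks of $\sigma$ consecutive letters. Cross-block bookkeeping costs only $O(r)$ bits: one bit array listing, letter-major, the per-block occurrence count of each letter in unary (a run of $1$s followed by a single $0$ separator) has exactly $r$ ones and at most $2r$ zeros, and, equipped with Lemma~\ref{RankSelectBitArray}, answers in $O(1)$ time both ``how many $c$'s lie in the first $b$ blocks'' and ``which block holds the $k$-th $c$, and with what in-block offset''; a symmetric block-major copy converts, also in $O(1)$ time, a rank within the sorted order of a block into its letter and back. This reduces each of the three queries to its in-block version plus $O(1)$ overhead.

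Inside a block $B$, sort its $\sigma$ positions stably by letter and let $\pi_B$ be the permutation taking a sorted rank to the original position; $\pi_B$ is then increasing on each letter class. Thus $\select_c(k,B)=\pi_B(o_c+k-1)$, where $o_c$ is the (computable in $O(1)$ time) starting offset of letter $c$ in the sorted order, and $\access(j,B)$ is obtained from $\pi_B^{-1}(j)$ followed by one letter lookup. Representing each $\pi_B$ with the classical succinct permutation structure of Munro et al.\ with slack $\epsilon=1/\rho$ supports $\pi_B$ in $O(1)$ time, $\pi_B^{-1}$ in $O(\rho)$ time, and uses $(1+\tfrac1\rho)\sigma\log\sigma+O(\sigma)$ bits; summing over the $\lceil r/\sigma\rceil$ blocks gives $(1+\tfrac1\rho)r\log\sigma+O(r)$ bits, $\select_c$ in $O(1)$ time, and $\access$ in $O(\rho)$ time.

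The main obstacle is $\rank_c$ inside a block in $O\!\big(\log\tfrac{\log\sigma}{\log\log n}\big)$ time. The occurrences of $c$ in $B$, listed in sorted order as $\pi_B(o_c),\pi_B(o_c+1),\dots$, form an increasing sequence, so $\rank_c(j,B)$ is a predecessor query over at most $\sigma$ values; a plain binary search with $O(1)$-time probes of $\pi_B$ would cost $O(\log\sigma)$, which is too slow. I would accelerate it with a sampling argument in the spirit of Lemma~\ref{FullEliasFano}: for every (block, letter) pair whose multiplicity $m$ exceeds a threshold $\theta=\Theta(\log_w\sigma)$ (with $w=\Theta(\log n)$, so $\log_w\sigma=\Theta(\log\sigma/\log\log n)$), store the structure of Lemma~\ref{PredecessorLemma} over its occurrences, at cost $O(m\log\tfrac{\sigma}{m})$ bits and query time $O(\log\log_w\sigma)=O(\log\tfrac{\log\sigma}{\log\log n})$; below the threshold, binary search costs $O(\log\theta)=O(\log\tfrac{\log\sigma}{\log\log n})$. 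Since the multiplicities sum to $r$, the total extra space is $O(\tfrac{r}{\theta}\log\sigma)=O(r\log\log n)$, which is $o(r\log\sigma)$ precisely because $\sigma=\log^{\omega(1)}n$; one further $O(r)$-bit marker array locates the relevant substructure. The delicate part is exactly these range-of-$\sigma$ boundaries — which is why I isolated the $\mathrm{polylog}(n)$ case up front — and checking that no logarithmic factor slips in there; the cleanest final sanity check is that everything coincides with \cite[Th.~6]{BelazzouguiNavarro} under the renaming of its time parameter to $\rho$.
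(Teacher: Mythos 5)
Your reconstruction follows the same underlying route as the result being cited --- the paper's proof is literally the one-liner ``set $f(n,\sigma)=\rho$ in the proof of \cite[Th.~6]{BelazzouguiNavarro}'', and that proof rests on exactly the block-of-size-$\sigma$-plus-succinct-permutations decomposition in the Golynski--Munro--Rao style that you spell out, including the separate treatment of $\sigma=\log^{O(1)}n$ via a multiary wavelet tree. So the approach is correct and in fact matches the paper's.

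However, there is a genuine gap in the space accounting of your $\rank_c$ step. You propose, for each (block, letter) pair with multiplicity $m>\theta$ where $\theta=\Theta(\log_w\sigma)$, to store a structure of Lemma~\ref{PredecessorLemma} over \emph{all} $m$ occurrences, at cost $O(m\log\frac{\sigma}{m})$ bits per pair, and then conclude from ``the multiplicities sum to $r$'' that the total is $O(\frac{r}{\theta}\log\sigma)$. That inference does not hold: $\sum m\log\frac{\sigma}{m}$ over pairs with $m\ge\theta$ is maximized when every such pair has $m=\theta$, giving $\Theta(r\log\frac{\sigma}{\theta})=\Theta(r\log\sigma)$, which is not $o(r\log\sigma)$ and would ruin the space bound. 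The fix is the mechanism of Lemma~\ref{FullEliasFano} that you invoke ``in spirit'' but do not actually carry out: within each (block, letter) list of length $m>\theta$, sample every $\lceil\theta\rceil$-th occurrence, feed only the $O(m/\theta)$ sampled values to Lemma~\ref{PredecessorLemma} at cost $O(\frac{m}{\theta}\log\frac{\sigma\theta}{m})\le O(\frac{m}{\theta}\log\sigma)$ bits, and answer a query by a predecessor search among the samples followed by a binary search among the fewer than $\theta$ unsampled occurrences between two consecutive samples. Only then does the total extra space genuinely become $O(\frac{r}{\theta}\log\sigma)=O(r\log\log n)=o(r\log\sigma)$ (your arithmetic is correct for this sampled version, just not for the scheme you actually describe), while the query time remains $O(\log\log_w\sigma+\log\theta)=O(\log\frac{\log\sigma}{\log\log n})$ as required.
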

\begin{proof}
The result follows from the proof of \cite[Theorem~6]{BelazzouguiNavarro}
if we put $f(n,\sigma) = \rho$.
\end{proof}

The structures $R$ and $H$ are already sufficient to implement $\access$ queries
in $O(\rho + \log\frac{\log(n/r)}{\log\log n})$ time since
$s[i] = H[\rank_1(i{-}1, R)]$. For $\rank_c$ and
$\select_c$, we need $C$ and $S$. It turns out that we do not have to store $S$
explicitly since, as it is shown below, $S[i]$ can be computed in $O(1)$ time,
for any $i \in [0..r{-}1]$, using $R$, $H$, and $C$. However, besides access,
our data structure requires to answer on $S$ the queries $\pred(x, S)$ that
return the maximal $i \in [0..r{-}1]$ such that $S[0] + S[1] + \cdots + S[i] < x$.

\begin{wraptable}{r}{70mm}
\vspace{-2ex}
\caption{\footnotesize Component sizes.}\label{tab:sizes}
\vspace{-1ex}
\begin{tabular}{cc}\toprule
~ & size in bits \\\midrule
$R$ & $(1 + \frac{1}{2^\rho})r\log\frac{n}{r} + O(r)$ \\
$H$ & $(1{+}\frac{1}{\rho})r\log\sigma + o(r\log\sigma) + O(r)$ \\
$C$ & $O(r)$ \\
$S$ & $O(\frac{r}{\rho}\log\frac{n\rho}{r})$\\\bottomrule
\end{tabular}
\vspace{-3ex}
\end{wraptable}

In order to perform $\pred$ in small space and $O(\rho +
\log\frac{\log(n/r)}{\log\log n})$ time, we sample the numbers $S[0] + S[1] +
\cdots + S[j\lceil\rho\rceil]$, for $j \in [0..\frac{r-1}{\lceil\rho\rceil}]$,
and store them in the data structure from Lemma~\ref{PredecessorLemma},
thus consuming $O(\frac{r}{\rho}\log\frac{n\rho}{r})$ bits.
To perform $\pred(x, S)$, we first find in
$O(\log\frac{\log(n\rho/r)}{\log\log n}) \le O(\rho +
\log\frac{\log(n/r)}{\log\log n})$ time the maximal $j$ such that $S[0]
+ S[1] + \cdots + S[j\lceil\rho\rceil] < x$ and, then, compute
$S[j\lceil\rho\rceil{+}1], S[j\lceil\rho\rceil{+}2], \ldots,
S[(j{+}1)\lceil\rho\rceil]$ in $O(\rho)$ time, hence finding the answer in an
obvious way.

The sizes of the described data structures are summarized in
Table~\ref{tab:sizes}. Before discussing the implementation of $\rank_c$ and
$\select_c$, let us explain how one can compute $S[i]$ in $O(1)$ time using
$R$, $H$, and $C$.

It follows from the definition of $C$ that $S[i]$ stores the length of a run of
the letter $c = \rank_1(\select_0(i + 1, C), C)$. Further, it is straightforward
that there are exactly $j = \select_1(c, C) - c + 1$ runs of letters $0, 1,
\ldots, c{-}1$ and therefore, by definition, $S[j]$ stores the length of the
leftmost run of $c$. Hence, $S[i]$ stores the length of the $k$th run of $c$,
where $k = i - j + 1$; we compute $k$ in $O(1)$ time. Then, we find $k' =
\select_c(k, H) + 1$ in $O(1)$ time (see Lemma~\ref{StringRankSelect}). Clearly, the
$k$th run of $c$ is the $k'$th run (of all runs) and its length, which is equal
to $S[i]$, can be calculated as $\select_1(k', R) - \select_1(k' - 1, R)$ in
$O(1)$ time.

Consider the $\select_c(i, s)$ query. Put $j = \select_1(c, C) - c + 1$. As
above, $S[j]$ is the length of the leftmost run of the letter $c$ (if any). Let
us find the maximal $k$ such that $S[j] + S[j{+}1] + \cdots + S[j{+}k{-}1] < i$;
obviously, the $i$th occurrence of $c$ (if any) must lie in the $(k{+}1)$st run
of $c$. As $k = \pred(S[0] + \cdots + S[j{-}1] + i, S) - j + 1$, it suffices to
show how to compute $t = S[0] + \cdots + S[j{-}1]$. We calculate $t$ by
summing $S[j'\lceil\rho\rceil{+}1] + S[j'\lceil\rho\rceil{+}2] + \cdots + S[j{-}1]$,
where $j' = \lfloor (j - 1)/\lceil\rho\rceil\rfloor$, with the $(j'{+}1)$st number
sampled from $S$, which, by definition, is equal to $S[0] + S[1] + \cdots +
S[j'\lceil\rho\rceil]$, all in $O(\rho + \log\frac{\log(n/r)}{\log\log n})$ time.
Further, the $(k{+}1)$st run of $c$ exists iff $c =
\rank_1(\select_0(j + k + 1, C), C)$;
we check this condition and return $-1$ if
the run does not exist. Otherwise, we calculate the sum $t' = S[0] + S[1] +
\cdots + S[j{+}k{-}1]$ in $O(\rho + \log\frac{\log(n/r)}{\log\log n})$ time in the
same way as we computed $t$; then, the $i$th occurrence of $c$ in the string $s$ must
be the $p$th letter, where $p = i - (t' - t)$, of the $(k{+}1)$st run of $c$; thus,
we obtain $\select_c(i, s) = \select_1(\select_c(k + 1, H), R) + p$.

Consider the $\rank_c(i, s)$ query. Put $j = \select_1(c, C) - c + 1$. Again,
$S[j]$ is the length of the leftmost run of the letter $c$ (if any). In
$O(\rho + \log\frac{\log(n/r)}{\log\log n})$ time we compute $m = \rank_1(i - 1, R)$, which is the number of
runs preceding the position $i$ (excluding the run containing~$i$). Then, $k =
\rank_c(m - 1, H)$ of them are runs of $c$; $k$ is computed in
$O(\log\frac{\log\sigma}{\log\log n})$ time by Lemma~\ref{StringRankSelect}.
Thus, the total length of the runs of $c$ preceding the position $i$ can be calculated
as $x = S[j] + S[j{+}1] + \cdots + S[j{+}k{-}1]$ in $O(\rho + \log\frac{\log(n/r)}{\log\log n})$ time
(as in $\select_c$ above). It remains to
check whether the position $i$ itself is inside a run of $c$: it is so iff $H[m]
= c$. Accordingly, we return $x + (i - \select_1(m, R))$ if $H[m] = c$ (here $i - \select_1(m, R)$
is the position of $i$ in the run), and $x$ otherwise.

\begin{lemma}
Let $\tau \ge 1$ be a ``sampling'' parameter.
Any string of length $n$ with $r$ runs over the alphabet $[0..\sigma{-}1]$ such
that $2 \le \sigma \le r$ has an encoding that occupies $(1 + \frac{1}{\tau})r\log\frac{n\sigma}{r} + o(r\log\sigma)
+ O(r)$ bits and supports the queries $\select_c$ and $\access$ in $O(\tau +
\log\frac{\log(n/r)}{\log\log n})$ time and $\rank_c$ in $O(\tau + \log\frac{\log(n\sigma/r)}{\log\log n})$ time.
\label{MainLemma}
\end{lemma}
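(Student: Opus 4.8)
The plan is to observe that the entire construction has already been carried out in the text preceding the statement; what remains is to choose the parameter $\rho$, assemble the components, and verify that the space and time bounds aggregate to the claimed expressions. I would begin by setting $\rho = \tau$ in the construction above, so that $R$ is encoded via Lemma~\ref{FullEliasFano} with sampling parameter $2^\rho$, $H$ is stored via Lemma~\ref{StringRankSelect} with sampling parameter $\rho$, $C$ is stored via Lemma~\ref{RankSelectBitArray}, and $S$ is represented implicitly (each entry computable in $O(1)$ time from $R$, $H$, $C$, as shown) together with the sampled prefix sums of $S$ placed in the predecessor structure of Lemma~\ref{PredecessorLemma}.

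For the space bound I would add the four rows of Table~\ref{tab:sizes}. The $R$ component contributes $(1+\frac{1}{2^\rho})r\log\frac{n}{r} + O(r)$; the $H$ component contributes $(1+\frac{1}{\rho})r\log\sigma + o(r\log\sigma) + O(r)$; the $C$ component contributes $O(r)$; and the sampled prefix sums of $S$ contribute $O(\frac{r}{\rho}\log\frac{n\rho}{r})$ bits. Using $\log\frac{n}{r} + \log\sigma = \log\frac{n\sigma}{r}$, the dominant terms combine to $r\log\frac{n\sigma}{r}$, the redundancy from the two $\frac{1}{\cdot}$ factors is at most $\frac{1}{\rho}r\log\frac{n\sigma}{r}$ (since $\frac{1}{2^\rho}\le\frac{1}{\rho}$ for $\rho\ge 1$), the $o(r\log\sigma)$ term is absorbed, and the $S$-sampling term $O(\frac{r}{\rho}\log\frac{n\rho}{r})$ is $O(\frac{r}{\rho}\log\frac{n\sigma}{r} + \frac{r}{\rho}\log\rho)$, which is also swallowed by $\frac{1}{\tau}r\log\frac{n\sigma}{r} + O(r)$ after adjusting constants. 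Thus the total is $(1+\frac{1}{\tau})r\log\frac{n\sigma}{r} + o(r\log\sigma) + O(r)$ as claimed; here one should note $r\le n$ so $\log\frac{n\sigma}{r}\ge\log\sigma$ and all the error terms are genuinely lower order.

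For the query times I would simply cite the three algorithms already described in the text. The $\access(i,s)$ query is $s[i] = H[\rank_1(i{-}1,R)]$, costing $O(\rho + \log\frac{\log(n/r)}{\log\log n})$ for the rank on $R$ (Lemma~\ref{FullEliasFano} with $\tau = 2^\rho$ gives $O(\log 2^\rho + \log\frac{\log(n/r)}{\log\log n}) = O(\rho + \log\frac{\log(n/r)}{\log\log n})$) plus $O(\rho)$ for $\access$ on $H$ (Lemma~\ref{StringRankSelect}). The $\select_c(i,s)$ query, as detailed above, performs a constant number of $O(1)$-time operations on $C$ and $H$, one $\pred$ query on $S$ in $O(\rho + \log\frac{\log(n/r)}{\log\log n})$ time, and $O(\rho)$ additional arithmetic to recover individual entries of $S$, for a total of $O(\rho + \log\frac{\log(n/r)}{\log\log n})$. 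The $\rank_c(i,s)$ query adds to this a $\rank_1$ on $R$ in $O(\rho + \log\frac{\log(n/r)}{\log\log n})$ time and a $\rank_c$ on $H$ in $O(\log\frac{\log\sigma}{\log\log n})$ time; bounding $\log\frac{\log(n/r)}{\log\log n} + \log\frac{\log\sigma}{\log\log n} = O(\log\frac{\log(n\sigma/r)}{\log\log n})$ (the sum of two logarithms is the logarithm of the product, and $\frac{\log(n/r)}{\log\log n}\cdot\frac{\log\sigma}{\log\log n} = O\big(\frac{\log(n\sigma/r)}{\log\log n}\big)$ since both factors are at least a constant) yields the stated $O(\tau + \log\frac{\log(n\sigma/r)}{\log\log n})$ bound. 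The only mildly delicate point — and the one I would state carefully — is this collapsing of the two separate $\log\log$-type terms into a single $\log\frac{\log(n\sigma/r)}{\log\log n}$, which relies on each being $\Omega(1)$ so that their product (hence the log of the product) is dominated by the log of $\frac{\log(n\sigma/r)}{\log\log n}$.
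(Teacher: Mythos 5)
Your overall plan matches the paper's proof: set the sampling parameter $\rho$ proportional to $\tau$, sum the rows of Table~\ref{tab:sizes}, and aggregate the query times already derived in the text. Two points in the execution, however, do not stand up.

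First, the justification you give for collapsing the rank time is wrong. You claim
$\frac{\log(n/r)}{\log\log n}\cdot\frac{\log\sigma}{\log\log n} = O\bigl(\frac{\log(n\sigma/r)}{\log\log n}\bigr)$
``since both factors are at least a constant.'' Two quantities that are $\Omega(1)$ can certainly have a product that grows strictly faster than their sum; for instance take $\log(n/r)=\log\sigma=(\log\log n)^2$ (feasible with $\sigma\le r\le n$), giving a product of $(\log\log n)^2$ against a sum of $O(\log\log n)$, so the stated inequality fails. What is true, and what you actually need, is only the weaker statement about logarithms: if $a,b\ge 1$ then $ab\le (a+b)^2$, hence $\log a+\log b\le 2\log(a+b)$, and when $a<1$ or $b<1$ the corresponding $\log$ term is non-positive and can be dropped (absorbed into the implicit $O(1)$). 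Applying this with $a=\frac{\log(n/r)}{\log\log n}$, $b=\frac{\log\sigma}{\log\log n}$, $a+b=\frac{\log(n\sigma/r)}{\log\log n}$ gives $O\bigl(\log\frac{\log(n\sigma/r)}{\log\log n}\bigr)$. Your conclusion is correct; the lemma you invoke to get there is false.

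Second, fixing $\rho=\tau$ at the outset is not quite enough. After summing Table~\ref{tab:sizes} and absorbing the $S$-sampling term $O(\frac{r}{\rho}\log\frac{n\rho}{r})$, the coefficient in front of $r\log\frac{n\sigma}{r}$ is $1+O(\frac{1}{\rho})$ with an unspecified constant hidden in the big-$O$, not literally $1+\frac{1}{\rho}$. With $\rho$ already pinned to $\tau$ there is no remaining freedom to ``adjust constants.'' The paper instead sets $\rho=c\tau$ for a suitable constant $c\ge1$ at the very end; this rescaling absorbs the hidden constant into the choice of $c$ without changing the asymptotic query times (they only gain a constant factor). You should do the same.

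Everything else — the space accounting, the $\access$ and $\select_c$ time analyses — follows the paper correctly.
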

\begin{proof}
The space required for $S$ is
$O(\frac{r}{\rho}\log\frac{n\rho}{r}) = O(\frac{r}{\rho}\log\frac{n}{r} + r\frac{\log\rho}{\rho})
\le O(\frac{r}{\rho}\log\frac{n}{r}) + O(r)$. Summing up the space bounds from Table~\ref{tab:sizes}, we obtain
$(1 + O(\frac{1}{\rho}))r\log\frac{n\sigma}{r} + o(r\log\sigma)
+ O(r)$ bits. Further, $\select_c$ and $\access$ run in $O(\rho + \log\frac{\log(n/r)}{\log\log n})$
time; $\rank_c$ takes $O(\rho + \log\frac{\log(n/r)}{\log\log n} +
\log\frac{\log\sigma}{\log\log n})$ time, which can be simplified as $O(\rho +
\log\frac{\log(n\sigma/r)}{\log\log n})$. Putting $\rho = c\tau$ for an appropriate
constant $c$, we obtain the result.
\end{proof}


\begin{theorem}
Any string of length $n$ with $r$ runs over the alphabet $[0..\sigma{-}1]$ such
that $2 \le \sigma \le r \le n / \log^{\omega(1)} n$ has an encoding that occupies
$r\log\frac{n\sigma}{r} + o(r\log\frac{n\sigma}{r})$ bits and supports
$\select_c$ and $\access$ in $O(\log\frac{\log(n/r)}{\log\log n})$
time and $\rank_c$ in $O(\log\frac{\log(n\sigma/r)}{\log\log n})$ time.
\label{MainTheorem}
\end{theorem}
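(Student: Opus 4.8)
The plan is to derive Theorem~\ref{MainTheorem} from Lemma~\ref{MainLemma} simply by choosing the sampling parameter $\tau$ appropriately and checking that under the extra hypothesis $r \le n/\log^{\omega(1)} n$ all the lower-order error terms in Lemma~\ref{MainLemma} collapse into $o(r\log\frac{n\sigma}{r})$. Concretely, I would set $\tau = \log n$ (or any slowly growing function, say $\tau = \log\log n$ would already do for the time bound, but $\log n$ is cleanest for the space bound). With this choice, the query times in Lemma~\ref{MainLemma} become $O(\log n + \log\frac{\log(n/r)}{\log\log n})$ and $O(\log n + \log\frac{\log(n\sigma/r)}{\log\log n})$, which is \emph{not} what we want --- so in fact $\tau = \log n$ is too large for the time bound. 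Instead the right move is to pick $\tau$ to be a constant, or better, to observe that we need $\tau = O(\log\frac{\log(n/r)}{\log\log n})$ for the time bound to be absorbed, while simultaneously needing $\frac{1}{\tau} r\log\frac{n\sigma}{r} = o(r\log\frac{n\sigma}{r})$, i.e.\ $\tau = \omega(1)$, for the space bound. So the plan is to choose $\tau = \min\{\log\frac{\log(n/r)}{\log\log n},\ \log\log n\}$ or, more simply, any function with $\tau = \omega(1)$ and $\tau = O(\log\frac{\log(n/r)}{\log\log n})$; such a function exists precisely because the hypothesis $r \le n/\log^{\omega(1)} n$ forces $\log(n/r) = \omega(\log\log n)$, hence $\log\frac{\log(n/r)}{\log\log n} = \omega(1)$, giving us room to fit an $\omega(1)$ function below it.

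With $\tau$ so chosen, the time bounds of Lemma~\ref{MainLemma} immediately simplify: $O(\tau + \log\frac{\log(n/r)}{\log\log n}) = O(\log\frac{\log(n/r)}{\log\log n})$ and likewise $O(\tau + \log\frac{\log(n\sigma/r)}{\log\log n}) = O(\log\frac{\log(n\sigma/r)}{\log\log n})$, since $\tau$ is dominated by the second summand in each case (using $\log(n\sigma/r) \ge \log(n/r)$ for the rank bound). For the space, Lemma~\ref{MainLemma} gives $(1 + \frac{1}{\tau}) r\log\frac{n\sigma}{r} + o(r\log\sigma) + O(r)$. The term $\frac{1}{\tau} r\log\frac{n\sigma}{r}$ is $o(r\log\frac{n\sigma}{r})$ because $\tau = \omega(1)$. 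The term $o(r\log\sigma)$ is trivially $o(r\log\frac{n\sigma}{r})$ since $\sigma \le n\sigma/r$ (as $r \le n$, indeed $r \le n/\log^{\omega(1)}n \le n$). Finally I must argue $O(r) = o(r\log\frac{n\sigma}{r})$: this needs $\log\frac{n\sigma}{r} = \omega(1)$, which again follows from $r \le n/\log^{\omega(1)} n \le n/\omega(1)$, so $n/r = \omega(1)$ and a fortiori $n\sigma/r = \omega(1)$. Collecting these, the total is $r\log\frac{n\sigma}{r} + o(r\log\frac{n\sigma}{r})$, as claimed.

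I do not expect any real obstacle here --- the theorem is essentially a specialization of Lemma~\ref{MainLemma}. The one point deserving care is the logical structure of the $\omega$ and $o$ notation: the statement "$r \le n/\log^{\omega(1)} n$" means that for every constant $c$, eventually $r \le n/\log^c n$; equivalently $\log(n/r)/\log\log n \to \infty$. One should be slightly careful that the chosen $\tau$ is itself a computable (or at least well-defined) function of $n$, so that the data structure is genuinely a single construction; taking $\tau = \lfloor\sqrt{\log\frac{\log(n/r)}{\log\log n}}\rfloor$ (clamped below by $1$) is one explicit choice that is both $\omega(1)$ and $O(\log\frac{\log(n/r)}{\log\log n})$ and sidesteps the need to know the hidden constant in the $\omega(1)$ assumption. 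With that, the proof is a two-line verification.
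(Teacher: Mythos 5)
Your proposal is correct and follows essentially the same route as the paper: specialize Lemma~\ref{MainLemma} with a sampling parameter $\tau$ that is simultaneously $\omega(1)$ (so the $\frac{1}{\tau}r\log\frac{n\sigma}{r}$ and $O(r)$ terms become lower order) and $O(\log\frac{\log(n/r)}{\log\log n})$ (so $\tau$ is absorbed by the query-time bounds), the hypothesis $r \le n/\log^{\omega(1)}n$ being exactly what guarantees such a $\tau$ exists. The paper simply takes $\tau = \log\frac{\log(n/r)}{\log\log n}$ outright, which is the most direct instance of the family of choices you describe.
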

\begin{proof}
The result follows from Lemma~\ref{MainLemma} with $\tau = \log\frac{\log(n/r)}{\log\log n}$
since, for $r \le n / \log^{\omega(1)} n$, we have $\tau = \omega(1)$ and $r = o(r\log\frac{n}{r})$.
\end{proof}

\begin{theorem}
Any string of length $n$ with $r$ runs over the alphabet $[0..\sigma{-}1]$ such
that $2 \le \sigma \le r$ has an encoding that occupies
$(1 + \epsilon)r\log\frac{n\sigma}{r} + O(r)$ bits, where $\epsilon$ is
an arbitrary positive constant, and supports $\select_c$ and $\access$ queries in
$O(\log\frac{\log(n/r)}{\log\log n})$ time and $\rank_c$ queries in
$O(\log\frac{\log(n\sigma/r)}{\log\log n})$ time.
\label{MainTheorem2}
\end{theorem}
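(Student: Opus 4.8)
The plan is to apply Lemma~\ref{MainLemma} with a suitably chosen constant $\tau$, and then remove the restriction $r \le n / \log^{\omega(1)} n$ of Theorem~\ref{MainTheorem} by treating the ``dense'' regime $r > n/\log^{\omega(1)} n$ separately. For the bulk of the parameter range, Lemma~\ref{MainLemma} already gives $(1+\frac{1}{\tau})r\log\frac{n\sigma}{r} + o(r\log\sigma) + O(r)$ bits; choosing $\tau = \lceil 1/\epsilon' \rceil$ for a small enough constant $\epsilon' < \epsilon$ absorbs the $\frac{1}{\tau}$ term into $\epsilon r\log\frac{n\sigma}{r}$, and the $o(r\log\sigma)$ term is also dominated by $\epsilon r\log\frac{n\sigma}{r}$ (since $\log\sigma \le \log\frac{n\sigma}{r}$), leaving $(1+\epsilon)r\log\frac{n\sigma}{r} + O(r)$ bits. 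With a constant $\tau$, the query times collapse to $O(\log\frac{\log(n/r)}{\log\log n})$ and $O(\log\frac{\log(n\sigma/r)}{\log\log n})$ as required. So in this regime we are simply done.

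First I would note that when $r$ is small — say $r \le n/\log^{\omega(1)} n$, or even just $r = n^{1-\Omega(1)}$, whatever threshold makes the $o(\cdot)$ terms behave — the argument of the previous paragraph with $\tau$ constant already yields the claim; in fact this includes the case covered by Theorem~\ref{MainTheorem}. The remaining case is when $r$ is close to $n$, i.e., $n/r$ is polylogarithmic or smaller (down to $r = n$, the run-length-trivial case). Here $\log\frac{n}{r} = O(\log\log n)$, so $\log\frac{\log(n/r)}{\log\log n} = O(1)$ and $\log\frac{\log(n\sigma/r)}{\log\log n} = O(\log\frac{\log\sigma}{\log\log n})$, meaning all target query times are $O(\log\frac{\log\sigma}{\log\log n})$, which is what we must hit. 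The point of separating this case is that the $o(r\log\sigma)$ term from the string structure of Lemma~\ref{StringRankSelect} (equivalently, from $H$) is the only non-$O(r)$ lower-order term, and when $n/r$ is small we cannot afford to hide it inside $o(r\log\frac{n\sigma}{r})$ unconditionally — but we \emph{can} hide it inside $\epsilon r \log\frac{n\sigma}{r}$, since $\log\frac{n\sigma}{r} \ge \log\sigma$ and $\epsilon$ is a constant, exactly as above. So in truth the same instantiation of Lemma~\ref{MainLemma} with constant $\tau$ works across the whole range; the ``case split'' is only a bookkeeping device to see that each lower-order term is individually $\le \epsilon r\log\frac{n\sigma}{r} + O(r)$.

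Concretely, the steps are: (i) invoke Lemma~\ref{MainLemma} with $\tau$ a constant to be fixed; (ii) bound the space: $(1+\frac{1}{\tau})r\log\frac{n\sigma}{r} + o(r\log\sigma) + O(r) \le (1 + \frac{1}{\tau} + \epsilon/2)\, r\log\frac{n\sigma}{r} + O(r)$ using $\log\sigma \le \log\frac{n\sigma}{r}$ to swallow the $o(\cdot)$ term; (iii) pick $\tau = \lceil 2/\epsilon \rceil$ so that $\frac{1}{\tau} + \epsilon/2 \le \epsilon$, giving the claimed $(1+\epsilon)r\log\frac{n\sigma}{r} + O(r)$ bound; (iv) observe that with $\tau = O(1)$ the query times from Lemma~\ref{MainLemma} are $O(\log\frac{\log(n/r)}{\log\log n})$ for $\select_c$ and $\access$ and $O(\log\frac{\log(n\sigma/r)}{\log\log n})$ for $\rank_c$. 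I expect the only mild subtlety — hardly an obstacle — to be verifying that $o(r\log\sigma) \le \frac{\epsilon}{2} r\log\frac{n\sigma}{r} + O(r)$ holds uniformly; this follows because $\log\sigma \le \log\frac{n\sigma}{r}$ always (as $n \ge r$), so $o(r\log\sigma) = o(r\log\frac{n\sigma}{r}) \le \frac{\epsilon}{2}r\log\frac{n\sigma}{r}$ once the argument is large enough, and the finitely many remaining cases contribute only $O(r)$.
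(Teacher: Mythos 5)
Your proposal is correct and follows essentially the same route as the paper's proof: instantiate Lemma~\ref{MainLemma} with a constant sampling parameter $\tau = \Theta(1/\epsilon)$, absorb the $o(r\log\sigma)$ lower-order term into $\epsilon' r\log\frac{n\sigma}{r}$ using $\log\sigma \le \log\frac{n\sigma}{r}$, and fold the constant $\tau$ into the big-$O$ time bounds. (Your bookkeeping with $\tau = \lceil 2/\epsilon\rceil$ and $\epsilon/2$ is in fact cleaner than the paper's, whose written choice $\tau = \frac{1}{2\epsilon}$ appears to be a typo for $\frac{2}{\epsilon}$; the case split you sketch and then retract is indeed unnecessary.)
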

\begin{proof}
Since $o(r\log\sigma) \le \frac{1}{2\epsilon}r\log\frac{n\sigma}{r}$ for large enough $n$,
the result follows from Lemma~\ref{MainLemma} with $\tau = \frac{1}{2\epsilon}$; the
big-O notation hides the additive constant $\frac{1}{\epsilon}$ in the time bounds.
\end{proof}

Lemma~\ref{MainLemma} implies many other trade-offs that we do not discuss separately.

\section{Optimality}

Clearly there is a one-to-one correspondence between the set $T$
of all strings of length $n$ with $r$ runs over the alphabet $[0..\sigma{-}1]$
and the pairs $(R, H)$ such that $R[0..n{-}2]$ is a bit array with $r - 1$ ones and
$H[0..r{-}1]$ is a string such that $H[i] \ne H[i{+}1]$, for $i \in
[0..r{-}2]$. Hence, the size of $T$ is ${n - 1 \choose r - 1}\sigma(\sigma -
1)^{r-1}$. Since $\log(x - \frac{1}{r}) \ge \log x - \frac{2}{r}$ for any $x \ge 1$
and $r \ge 2$, we obtain $\log{n - 1 \choose r - 1} \ge (r - 1)\log\frac{n - 1}{r - 1} \ge
(r - 1)\log(\frac{n}{r} - \frac{1}{r}) \ge (r - 1)(\log\frac{n}{r} - \frac{2}{r}) =
r\log\frac{n}{r} - O(\log\frac{n}{r})$ and thus
$\log|T| \ge r\log\frac{n}{r} - O(\log\frac{n}r) + \log((\sigma - 1)^r) = r\log\frac{n(\sigma - 1)}{r} - O(\log\frac{n}r)$,
which implies the following lower bound and that therefore the data structure of
Theorem~\ref{MainTheorem} is succinct.

\begin{theorem}
Any encoding of a string of length $n$ with $r$ runs over an alphabet of size
$\sigma$ requires at least $r\log\frac{n(\sigma - 1)}{r} - O(\log\frac{n}r)$ bits in the worst
case.\label{SpaceLowerBound}
\end{theorem}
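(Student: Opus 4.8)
The plan is to bound $\log|T|$ from below, where $T$ is the set of all strings of length $n$ with exactly $r$ runs over $[0..\sigma-1]$ (assuming $r\le n$, since otherwise $T=\emptyset$ and the claim is vacuous), and then apply the standard information-theoretic argument: any uniquely decodable encoding induces an injection from $T$ into $\{0,1\}^*$, and since fewer than $|T|$ binary strings have length below $\log|T|$, some element of $T$ must be assigned a codeword of length at least $\log|T|-O(1)$ bits. So the whole content is a counting estimate plus this one-line pigeonhole/Kraft observation.

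First I would make the count precise using the bijection already noted above: a string in $T$ is determined by the pair $(R,H)$, where $R[0..n-2]$ marks which of the $n-1$ adjacent pairs $s[i],s[i+1]$ differ (a bit array with exactly $r-1$ ones), and $H[0..r-1]$ lists the run letters subject only to $H[i]\ne H[i+1]$. There are $\binom{n-1}{r-1}$ choices for $R$ and $\sigma(\sigma-1)^{r-1}$ independent choices for $H$ (the first letter is free, each later one avoids its predecessor), so $|T|=\binom{n-1}{r-1}\,\sigma(\sigma-1)^{r-1}$, and hence $\log|T|=\log\binom{n-1}{r-1}+\log\sigma+(r-1)\log(\sigma-1)\ge\log\binom{n-1}{r-1}+r\log(\sigma-1)$.

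Next I would estimate the binomial coefficient by $\binom{n-1}{r-1}\ge\left(\frac{n-1}{r-1}\right)^{r-1}$, giving $\log\binom{n-1}{r-1}\ge(r-1)\log\frac{n-1}{r-1}$, and then carry out the routine simplification: using $\log(x-\tfrac1r)\ge\log x-\tfrac2r$ for $x\ge1$ and $r\ge2$, one obtains $(r-1)\log\frac{n-1}{r-1}\ge(r-1)\log\!\left(\tfrac{n}{r}-\tfrac1r\right)\ge(r-1)\!\left(\log\tfrac{n}{r}-\tfrac2r\right)=r\log\tfrac{n}{r}-O(\log\tfrac{n}{r})$. Combining this with the previous display yields $\log|T|\ge r\log\frac{n}{r}+r\log(\sigma-1)-O(\log\frac{n}{r})=r\log\frac{n(\sigma-1)}{r}-O(\log\frac{n}{r})$, which is exactly the claimed worst-case bound.

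I do not expect any real obstacle: the paragraph immediately preceding the theorem statement already performs this derivation, so the proof is essentially a restatement of that computation together with the encoding-size/counting conversion. The only points needing a little care are the boundary cases — $r=2$, or $\sigma$ at its minimum value $2$ where $\log(\sigma-1)=0$, in which case the estimate degrades gracefully to $r\log\frac{n}{r}-O(\log\frac{n}{r})$ and remains correct — and checking that the additive $O(1)$ slack introduced by the pigeonhole step is harmlessly absorbed into the $O(\log\frac{n}{r})$ error term.
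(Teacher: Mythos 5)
Your proof is correct and follows essentially the same route as the paper: both count $|T|=\binom{n-1}{r-1}\sigma(\sigma-1)^{r-1}$ via the $(R,H)$ bijection and use the same elementary estimates (in particular the inequality $\log(x-\tfrac1r)\ge\log x-\tfrac2r$) to reach $\log|T|\ge r\log\frac{n(\sigma-1)}{r}-O(\log\frac{n}{r})$. Your added remarks about the pigeonhole step and boundary cases are fine but do not change the argument.
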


Let us investigate the optimality of the query times provided in
Theorems~\ref{MainTheorem} and~\ref{MainTheorem2}. For this, we use reductions
to the well-known \emph{colored predecessor data structure}, in which one is given
a set of $r$ integers from the universe $[0..n]$ each of which is colored either
in red or blue, and the query asks to find, for a given integer $x$, the color of
the maximal $y$ from this set such that $y \le x$. Our reductions resemble those
from~\cite[section~7]{RamanRamanRao} but we, nevertheless, present them for
completeness.

\begin{lemma}
Suppose that, for any binary string of length $n$ with $r$ runs, there is an
encoding that occupies $O(r\log^{O(1)} n)$ bits and supports $\rank_c$, $\select_c$,
and $\access$ queries in, respectively, $t_r$, $t_s$, and $t_a$ time. Then,
there is a colored predecessor data structure that stores $r$ integers (colored
in red or blue) from the universe $[0..n]$ in $O(r\log^{O(1)} n)$ bits of space and
supports the colored predecessor queries in $O(\min\{t_r, t_s, t_a\})$ time.
\label{ReductionToPred}
\end{lemma}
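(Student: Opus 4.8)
The plan is to encode the colored predecessor instance as a single binary string $B$ over an alphabet of size $2$ whose run structure mirrors the input set, so that a colored predecessor query reduces to a single $\rank$, $\select$, or $\access$ query on $B$. Given the $r$ colored integers $0 \le x_1 < x_2 < \cdots < x_r \le n$, I would build a string $B[0..n]$ of length $n+1$ as follows: think of the integers as delimiting intervals $[x_i .. x_{i+1}-1]$ (together with a final interval $[x_r..n]$), and assign to all positions in the $i$th interval the bit equal to the color of $x_i$ (say $0$ for red, $1$ for blue). Then $B$ consists of at most $r$ runs (consecutive intervals with the same color merge, but this only decreases the run count, so the hypothesis still applies with parameter $r$), and it has length $n+1 = O(n)$, so the assumed encoding occupies $O(r\log^{O(1)} n)$ bits.

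The key observation is that the color of the predecessor $y = \max\{x_i \colon x_i \le x\}$ of a query point $x$ is exactly the bit $B[x]$: by construction, position $x$ lies in the interval starting at $y$, and that interval carries $y$'s color. Hence a colored predecessor query is answered by a single $\access(x, B)$ query in $O(t_a)$ time. To also get the bound in terms of $t_s$ and $t_r$, I would note that $\access$ can be simulated by one $\rank$ query — $B[x] = \rank_1(x, B) - \rank_1(x-1, B)$ — giving $O(t_r)$ time, and also by a constant number of $\select$ queries: $B[x] = 1$ iff $\select_1(\rank_1(x,B), B) = x$, but since we cannot call $\rank$ here, instead use that $B[x]=c$ can be decided by a binary search over $\select_c$ values, or more simply observe that one can precompute nothing and still recover $\access$ from $\select$ in $O(t_s)$ time using the standard identity that $\access(x,B)$ equals the unique $c$ with $\select_c(j,B) \le x < \select_c(j+1,B)$ for the appropriate $j$; for a binary alphabet this is just two $\select$ evaluations after computing $j$ via... — to keep things clean I would instead simply store, alongside $B$, an $o(r\log^{O(1)}n)$-bit auxiliary structure or, better, reduce directly: answer the query by $\access(x,B)$ and separately observe $\access$ is no harder than $\min\{t_s,t_r\}$ up to constants, which is a standard fact for binary strings. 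Thus the colored predecessor query runs in $O(\min\{t_r, t_s, t_a\})$ time.

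The only mild subtlety — and the part deserving the most care — is handling degenerate cases and the exact run count: if $x < x_1$, the predecessor is undefined, which one handles by adding a sentinel integer $0$ with a distinguished behavior or by assuming $x_1 = 0$ without loss of generality (pad the instance). One must also confirm that merging equal-color adjacent intervals does not violate the premise: the hypothesis is stated for strings with $r$ runs, and our $B$ has \emph{at most} $r$ runs, so it applies a fortiori (the space bound $O(r\log^{O(1)}n)$ is monotone in the run count). I expect the argument to be short; the bookkeeping of the three-way $\min$ via the interchangeability of $\access$, $\rank$, $\select$ on binary strings is the only place one needs to be slightly careful, and it is routine.
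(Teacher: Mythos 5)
Your reduction works for $\access$ (it is essentially the paper's first case) and for $\rank$ (via $B[x] = \rank_1(x,B) - \rank_1(x-1,B)$, which differs cosmetically from the paper's one-$\rank$ variant but is equally valid). The genuine gap is exactly where you flag doubt: the $\select$ case. There is no ``standard fact'' that $\access$ on a binary string is computable in $O(t_s)$ time from $\select$ alone. Knowing $B[x]$ via $\select$ requires knowing whether $x$ is, say, the $j$th one for some $j$ --- and finding $j$ is a $\rank$ query, which you are not allowed to charge to $t_s$. Binary search over $\select_1$ values gives $O(t_s \log r)$, not $O(t_s)$, which does not yield the claimed $O(\min\{t_r,t_s,t_a\})$ bound. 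So the same string $B$ cannot be reused; a different encoding is needed when $\select$ is the cheap operation.

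The paper's fix is to build a \emph{separate} bit string tailored to $\select$: store the colors explicitly in an array $c[0..r{-}1]$ and set $s[0..n{+}r]$ with $s[i]=1$ iff $i = x_j + j - 1$ for some $j$. The shift by $j-1$ spreads the ones so that the position of the $x$th zero encodes how many $x_j$'s are $\le x$: the colored predecessor query is answered as $c[\select_0(x,s) - x]$ with a single $\select_0$ call. Your interval-coloring string has no analogous property. To repair your write-up you should, like the paper, give three distinct encodings --- one per query type --- and pick whichever matches $\min\{t_r,t_s,t_a\}$; the lemma does not require (and cannot achieve) all three query types to be fast on one fixed string. Your handling of degenerate cases and of the ``at most $r$ runs'' monotonicity point is fine.
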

\begin{proof}
Let $x_1, x_2, \ldots, x_r$ be the integers stored in our colored predecessor
data structure. Suppose $t_a \le \min\{t_r, t_s\}$. We create a bit string
$s[0..n]$ such that $s[i] = 1$ iff the predecessor of $i$ is colored in
red. Then, the colored predecessor queries can be answered in $O(t_a)$ time
using our $O(r\log^{O(1)} n)$-bit encoding and $\access$ queries.

Suppose $t_r \le \min\{t_s, t_a\}$. Then, we store the colors of $x_1, x_2,
\ldots, x_r$ in an array $c[0..r{-}1]$ and create a bit string $s[0..n]$ such
that $s[i] = 1$ iff $i = x_j$ for some $j \in [1..r]$. Thus, the colored
predecessor query can be answered as $c[\rank_1(x, s){-}1]$ in $O(t_r)$ time.

Finally, suppose $t_s \le \min\{t_r, t_a\}$. We create again the array $c$ and
create a bit string $s[0..n{+}r]$ such that $s[i] = 1$ iff $i = x_j + j - 1$ for
some $j \in [1..r]$. Then, the colored predecessor query can be answered as
$c[\select_0(x, s) - x]$ in $O(t_s)$ time.
\end{proof}

Assuming
$r \ge 2^{\log^\delta n}$ for a constant $\delta > 0$ and
putting $n' := r$, $S := r\log^{O(1)} n$, $w := \Theta(\log n)$, $\ell := \log n$
in the formula of P{\u{a}}tra{\c{s}}cu and Thorup~\cite{PatrascuThorup}
(we denote their $n$ by $n'$ to distinguish it from our $n$), we obtain the lower bound
$\Omega(\min\{\frac{\log r}{\log\log n}, \log\frac{\log(n/r)}{\log\log n},$ $\log\log n, \log\log n\})
= \Omega(\log\frac{\log(n/r)}{\log\log n})$, which holds, by Lemma~\ref{ReductionToPred},
for $\rank_c$, $\select_c$, and $\access$ in any data structure occupying
$O(r\log^{O(1)} n)$ bits.
Combining this with the lower bound $\Omega(\log\frac{\log\sigma}{\log\log n})$
from~\cite{BelazzouguiNavarro} for rank, we deduce the following theorem.

\begin{theorem}
Any data structure that stores a string of length $n$ with $r$ runs in $O(r\log^{O(1)} n)$ bits
requires $\Omega(\log\frac{\log(n\sigma/r)}{\log\log n})$ time for $\rank_c$ and
$\Omega(\log\frac{\log(n/r)}{\log\log n})$ time for $\select_c$ and $\access$ in the
worst case, provided $r \ge 2^{\log^\delta n}$ for a constant $\delta \in (0,1)$.%
\label{LowerBoundThm}%
\end{theorem}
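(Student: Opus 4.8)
The plan is to derive the two lower bounds from the colored predecessor bound established via Lemma~\ref{ReductionToPred} together with the known rank lower bound of Belazzougui and Navarro~\cite{BelazzouguiNavarro}, exactly as foreshadowed in the text preceding the statement. First I would invoke Lemma~\ref{ReductionToPred}: any encoding of a binary string of length $n$ with $r$ runs occupying $O(r\log^{O(1)} n)$ bits yields a colored predecessor structure on $r$ points from $[0..n]$ in the same space, whose query time is $O(\min\{t_r, t_s, t_a\})$. Then I would plug the parameters $n' := r$, $S := r\log^{O(1)} n$, $w := \Theta(\log n)$, $\ell := \log n$ into the P{\u{a}}tra{\c{s}}cu--Thorup cell-probe lower bound~\cite{PatrascuThorup} for predecessor search, noting that the hypothesis $r \ge 2^{\log^\delta n}$ forces $\log r = \log^{\Omega(1)} n$ so the relevant term is not vacuous. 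The four branches of their formula evaluate to $\frac{\log r}{\log\log n}$, $\log\frac{\log(n/r)}{\log\log n}$, $\log\log n$, and $\log\log n$; since $r \le n$ and $r \ge 2^{\log^\delta n}$, the minimum of these is $\Theta(\log\frac{\log(n/r)}{\log\log n})$, giving $\Omega(\log\frac{\log(n/r)}{\log\log n})$ as a lower bound on $\min\{t_r, t_s, t_a\}$, hence on each of $t_r$, $t_s$, $t_a$ individually.

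This already establishes the $\Omega(\log\frac{\log(n/r)}{\log\log n})$ bound for $\select_c$ and $\access$, and it also gives the same bound for $\rank_c$. To upgrade the $\rank_c$ bound to $\Omega(\log\frac{\log(n\sigma/r)}{\log\log n})$, I would combine it with the second ingredient: the lower bound $\Omega(\log\frac{\log\sigma}{\log\log n})$ for rank queries on general strings of length $r$ over alphabet $\sigma$, due to Belazzougui and Navarro~\cite{BelazzouguiNavarro}. A string of length $r$ over $[0..\sigma{-}1]$ can be viewed as a (degenerate) run-length string of length $n \ge r$ with at most $r$ runs after padding, so any $O(r\log^{O(1)}n)$-bit run structure also solves rank on such strings; thus $t_r = \Omega(\log\frac{\log\sigma}{\log\log n})$ as well. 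Taking the maximum of the two lower bounds on $t_r$ and using $\max\{a,b\} \ge \frac{1}{2}(a+b)$ together with $\log\frac{\log(n/r)}{\log\log n} + \log\frac{\log\sigma}{\log\log n} = \log\frac{\log(n/r)\log\sigma}{(\log\log n)^2}$, which is $\Theta(\log\frac{\log(n\sigma/r)}{\log\log n})$ in the regime $r \ge 2^{\log^\delta n}$ (so that $\log(n/r), \log\sigma$ are both polynomial in $\log n$ and the product is comparable to $\log(n\sigma/r)$ up to a $\log\log n$ factor inside the outer log), yields the claimed $\Omega(\log\frac{\log(n\sigma/r)}{\log\log n})$ time bound for $\rank_c$.

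The main obstacle I anticipate is purely the asymptotic bookkeeping in the last step: verifying that $\max\{\log\frac{\log(n/r)}{\log\log n}, \log\frac{\log\sigma}{\log\log n}\} = \Theta(\log\frac{\log(n\sigma/r)}{\log\log n})$ under the hypothesis $r \ge 2^{\log^\delta n}$, and similarly checking that the minimum of the four Pătrașcu--Thorup terms is genuinely $\log\frac{\log(n/r)}{\log\log n}$ and not dominated by $\log\log n$ or $\frac{\log r}{\log\log n}$ for some extreme relative sizes of $r$ and $n$. The constraint $\delta \in (0,1)$ and $\sigma \le r$ keep all quantities in the polylogarithmic-in-$n$ range where these estimates are routine, but one must be careful that $n/r$ can be as small as a constant (making $\log\frac{\log(n/r)}{\log\log n}$ potentially $O(1)$ or even negative in form), in which case the bound is trivially true and the theorem statement should be read with the usual convention that such terms are $\Omega(1)$. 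Everything else is a direct citation of Lemma~\ref{ReductionToPred}, \cite{PatrascuThorup}, and \cite{BelazzouguiNavarro}.
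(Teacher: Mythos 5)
Your proposal follows exactly the paper's route: Lemma~\ref{ReductionToPred} to reduce to colored predecessor, the P{\u{a}}tra{\c{s}}cu--Thorup bound with the same parameter substitutions $n' := r$, $S := r\log^{O(1)} n$, $w := \Theta(\log n)$, $\ell := \log n$ to extract $\Omega(\log\frac{\log(n/r)}{\log\log n})$, and the Belazzougui--Navarro $\Omega(\log\frac{\log\sigma}{\log\log n})$ rank lower bound to upgrade the $\rank_c$ term. The only cosmetic difference is that the paper merely says ``combining'' the two rank bounds while you spell out the asymptotic bookkeeping; your derivation via the product $\log(n/r)\log\sigma$ is slightly roundabout (cleaner is $\max\{\log x,\log y\}=\log\max\{x,y\}=\Theta(\log(x+y))$ with $x=\log(n/r)/\log\log n$, $y=\log\sigma/\log\log n$), but it reaches the same conclusion and you correctly note the trivial-$\Omega(1)$ convention when $n/r$ or $\sigma$ is tiny.
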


Theorem~\ref{LowerBoundThm} implies that the data structures of
Theorems~\ref{MainTheorem} and~\ref{MainTheorem2} are time optimal whenever
$r \ge 2^{\log^\delta n}$, for an arbitrary positive constant $\delta \in (0,1)$.

\section{Experimental Evaluation}

We implemented our data structure and measured its practical performance relative
to other rank and select data structures available in the Succinct Data Structures
Library (SDSL)~\cite{gbmp2014sea}, including the data structures of:
Golynski et al.~({\tt gmr})~\cite{GolynskiMunroRao},
Barbay et al.~({\tt ap})~\cite{Barbay2014}, and M\"{a}kinen
and Navarro ({\tt
  rlmn})~\cite{Makinen:2005:SSA:1195881.1195885}. The implementation
of {\tt gmr} uses $n\log\sigma + o(n\log\sigma)$ bits and supports access, rank,
and select in, resp., $O(\log\log\sigma)$,
$O(\log\log\sigma)$, and $O(1)$ time. The implementation of {\tt ap} uses
$nH_0+o(n)(H_0+1)$ bits and supports access, rank, and select in
$O(\log\log\sigma)$ worst-case time or $O(\log H_0)$ average time. The
implementation of {\tt rlmn} uses $2r(2+\log(n/r))+\sigma\log n+u$ bits,
where $u$ is the space of an underlying rank/select structure over a sequence
of length $r$, and supports access, rank, and select in, resp.,
$O(\log(n/r)+u_a)$, $O(\log(n/r)+u_a+u_r)$, and $O(\log(n/r)+u_s)$ time,
where $u_a$, $u_r$, and $u_s$ correspond to the access, rank, and select
times of the underlying structure; we use {\tt ap} as the underlying
structure as it showed the best time-space trade-offs. Also, we tested the
data structure of Belazzougui et al.~({\tt rle})~\cite{Belazzougui2017}, which uses
$(1+\gamma)r\log\frac{n\sigma}{r} + O(r)$ bits and supports access, rank, and
select in $O(\frac{1}{\gamma}\log\frac{n}{r})$ time, for any $\gamma \in (0,1)$; {\tt rle}
is similar to our solution\footnote{We thank the reviewers for informing us about this data structure.}
but it was implemented only for small alphabets (hence we could not apply it
for all datasets). Additionally, we implement a simple construction
of~\cite{Belazzougui2015}~({\tt bcgpr}): it uses $O(r\log n)$
bits and supports rank and select in $O(\log\log n)$ time (access was not
considered originally). The construction contains
a pair of predecessor data structures for each letter $c \in [0..\sigma{-}1]$: the
first predecessor structure stores the starting indexes of runs of $c$ and the
second one stores the number of letters $c$ before the starting index of each run
of $c$. Our implementation of {\tt bcgpr} uses binary searches instead of the predecessor
structures and additionally we store the bit array $R$ and the string $H$ of our solution, but
without rank/select support for $H$, in order to support access; thus, access
takes $O(\log\frac{\log(n/r)}{\log\log n})$ time and rank/select queries take $O(\log r)$ time.

\myparagraph{Implementation.} 
Our solution implements $R$ using Elias--Fano sparse bit array
from the {\sc SDSL}; for $H$, we used the {\sc SDSL} implementations
of {\tt gmr} and {\tt ap} for large alphabets, and {\tt huff} for small alphabets;
$C$ was encoded as a plain bit
array supporting rank and select in $O(1)$ time; finally, $S$ was
stored as an integer array with samples to support $\pred$ queries via binary search
over $S$. The queries access, rank, and select were implemented
verbatim. In the experiments, we call our solution
{\tt fkkp}\footnote{The implementation is available at \url{https://github.com/jfuentess/sdsl-lite}}.

\myparagraph{Experimental setup.}  The experiments were carried out on an
Intel\textregistered{} Core\textregistered{} i7-7700 machine with 4 physical
cores, clocking at 3.6GHz each, with one 32KB L1 instruction cache per core, one
32KB L1 data cache per core, one 256KB L2 cache per core, and
one 8MB shared L3 cache.
The code of all the structures was compiled with {\tt g++} and optimization
level -O3.
The data structures were
compared in terms of query times using the high-resolution C++ function {\tt
  high\_resolution\_clock} in the {\tt <chrono>} library. The space consumption
was measured by the serialization of data structures to their binary format.
Experiments constructed each data structure on several
datasets with varying $n$ and $\sigma$.
We tested several power of two sampling steps for our structure and {\tt rle}, but we report only
4, 16, and 32 since they exhibited the best time-space trade-offs. To
differentiate each configuration, we use the
name {\tt fkkp\_x\_y} and {\tt rle\_x} to denote a sampling step of {\tt x} and
underlying structure {\tt y}.

\begin{wraptable}{r}{63mm}
\vspace*{-3ex}
\caption{\footnotesize Datasets used in experiments.}
\vspace{-3ex}
\begin{center}
  \footnotesize
  \begin{tabular}{@{\hspace{-0.1ex}}l@{\hspace{1ex}}r@{\hspace{2ex}}r@{\hspace{2ex}}r@{\hspace{-0.1ex}}}
     \toprule 	
     dataset & $n$ & $\sigma$ & runs \\
     \midrule 	
      {\tt wl\_1B} & 46,968,182 & 90 & 573,487 \\
      {\tt wl\_2B} & 46,968,182 & 2,528 & 875,406 \\
      {\tt wiki} & 140,990,835 & 174,796 & 2,586,752 \\
      {\tt kr\_1B} & 257,961,617 & 161 & 2,791,368 \\
      {\tt kr\_2B} & 257,961,617 & 7,124 & 4,194,799 \\
     \bottomrule 	
   \end{tabular}
\end{center}
\vspace*{-3.5ex}
\label{tbl:datasets}
\end{wraptable}

The datasets are shown in Table~\ref{tbl:datasets}.\footnote{Available online at \url{https://users.dcc.uchile.cl/~jfuentess/datasets/sequences.php}}
All our datasets are the BWT of highly repetitive sequences.
The datasets {\tt wl\_1B} and {\tt wl\_2B} were generated by taking
the previous and the two previous symbols during the BWT computation of the
sequence {\em world-leaders} from the {\em Pizza\&Chili} repetitive
corpus\footnote{\url{http://pizzachili.dcc.uchile.cl/repcorpus}. Last access:
  Nov.~2, 2017.}. Similarly, the datasets {\tt kr\_1B} and {\tt
  kr\_2B} were generated from the repetitive sequence {\em kernel}
from Pizza\&Chili. The dataset {\tt wiki} was generated from the edit history
of some Wikipedia pages in which words were used as letters\footnote{\url{https://dumps.wikimedia.org/enwiki/20171001/enwiki-20171001-pages-meta-history10.xml-p3037476p3046511.7z}. Last
  access: Nov. 2, 2017.}.\\


\myparagraph{Results.}
Table~\ref{tbl:space} shows the size of each data structure for all the
datasets. In our experiments, the structure {\tt fkkp\_ap\_32} provides the best space
consumption, except for the dataset {\tt wl\_1B}, where {\tt rle\_32} has the
best consumption. For the underlying structures {\tt gmr}, {\tt ap}, and {\tt huff}, our
structure reduces its size by increasing the size of the sampling step. For small
alphabets, we are comparable with {\tt rle}. For large alphabets, the
closest competitor, {\tt rlmn}, uses from 31\% to 46\% more space than {\tt
  fkkp\_ap\_32}.

\begin{table}[t]
\caption{\footnotesize Space usage of the data structures in megabytes. Best results are underlined.}
\vspace{-3.5ex}
\begin{center}
  \scriptsize
   \begin{tabular}{l@{\hspace{2ex}}r@{\hspace{2ex}}r@{\hspace{2ex}}r@{\hspace{2ex}}r@{\hspace{2ex}}r@{\hspace{2ex}}r@{\hspace{2ex}}r@{\hspace{2ex}}r@{\hspace{2ex}}r@{\hspace{2ex}}r@{\hspace{2ex}}r@{\hspace{2ex}}r@{\hspace{2ex}}r@{\hspace{2ex}}r@{\hspace{2ex}}r@{\hspace{2ex}}r@{\hspace{2ex}}}
     \toprule
     & \multicolumn{3}{c}{{\tt fkkp\_gmr}} & \multicolumn{3}{c}{{\tt fkkp\_ap}} &
     \multicolumn{3}{c}{{\tt fkkp\_huff}} & \multirow{2}{*}{{\tt gmr}} &
     \multirow{2}{*}{{\tt ap}} & \multirow{2}{*}{{\tt rlmn}} & \multirow{2}{*}{{\tt bcgpr}} & \multicolumn{3}{c}{{\tt rle}} \\
     \cmidrule(r){2-4} \cmidrule(r){5-7} \cmidrule(r){8-10}\cmidrule(r){15-17}
     & {\tt 4} & {\tt 16} & {\tt 32} & {\tt 4} & {\tt 16} & {\tt 32} & {\tt 4} &
              {\tt 16} & {\tt 32} & & & & & {\tt 4} & {\tt 16} & {\tt 32} \\
     \midrule
     {\tt wl\_1B} & 2.21 & 1.80 & 1.73 & 1.82 & 1.41 & 1.34 & 1.79 & 1.38 & 1.31 & 78.71 & 26.99 & 1.93 & 9.96 & 1.40 & 1.26 & \underline{1.22} \\
     {\tt wl\_2B} & 4.15 & 3.53 & 3.42 & 3.24 & 2.61 & \underline{2.51} &  &  &  & 125.57 & 42.60 & 3.67 & 15.97 &  &  &  \\
     {\tt wiki} & 13.15 & 11.52 & 11.27 & 11.59 & 9.96 & \underline{9.71} &  &  &  & 402.42 & 274.61 & 12.75 & 53.56 &  &  &  \\
     {\tt kr\_1B} & 10.55 & 8.55 & 8.22 & 8.47 & 6.47 & \underline{6.14} & 8.53 & 6.54 & 6.20 & 440.30 & 230.83 & 8.84 & 48.29 & 7.18 & 6.53 & 6.40 \\
     {\tt kr\_2B} & 19.79 & 16.79 & 16.29 & 15.43 & 12.44 & \underline{11.94} &  &  &  & 698.00 & 410.50 & 16.21 & 76.43 &  &  &  \\
     \bottomrule 	
   \end{tabular}
\end{center}
\label{tbl:space}
\vspace*{-3.5ex}
\end{table}

\begin{table}[t]
  \setlength\tabcolsep{2.5pt}
  \scriptsize
  \caption{\footnotesize Running times for the access, rank, and select queries in $\mu$s. Best times are underlined.}
  \vspace{-3.5ex}
  \begin{center}
   \begin{tabular}{lrrrrr|rrrrr|rrrrr}
     \toprule 	
      & \multicolumn{5}{c}{access} & \multicolumn{5}{c}{rank} & \multicolumn{5}{c}{select}\\
     \cmidrule(lr){2-6}\cmidrule(lr){7-11}\cmidrule(lr){12-16}
     & {\tt wl\_1B} & {\tt wl\_2B} & {\tt wiki} & {\tt kr\_1B} & {\tt kr\_2B} &
              {\tt wl\_1B} & {\tt wl\_2B} & {\tt wiki} & {\tt kr\_1B} & {\tt
                kr\_2B} & {\tt wl\_1B} & {\tt wl\_2B} & {\tt wiki} & {\tt kr\_1B} & {\tt kr\_2B}\\
     \midrule
     {\tt fkkp\_gmr\_4} & .18 & .49 & .66 & .26 & 1.03 & .86 & 2.04 & 2.02 & .99 & 2.64 & .92 & 1.32 & 1.30 & .99 & 1.45 \\
     {\tt fkkp\_gmr\_16} & .18 & .49 & .65 & .26 & 1.03 &2.16 & 4.62 & 3.71 & 2.12 & 4.85 & 2.18 & 3.57 & 3.02 & 2.08 & 3.48 \\
     {\tt fkkp\_gmr\_32} & .18 & .49  & .64 & .26 & 1.03 & 3.42 & 6.61 & 5.81 & 3.50 & 7.58 & 3.89 & 6.49 & 5.20 & 3.52 & 6.16 \\
     {\tt fkkp\_ap\_4} & .26 & .42 & .64 & .32 & .62 &1.15 & 1.54 & 2.82 & 1.39 & 2.73 & 1.23 & 1.66 & 3.03 & 1.57 & 2.78 \\
     {\tt fkkp\_ap\_16} & .26 & .42 & .64 & .32 & .62 & 3.62 & 3.93 & 6.88 & 3.82 & 6.69 & 3.06 & 4.22 & 6.84 & 3.76 & 6.31 \\
     {\tt fkkp\_ap\_32} & .26 & .42 & .63 & .32 & .62 & 5.64 & 6.61 & 11.49 & 6.79 & 11.38 & 5.51 & 7.50 & 11.45 & 6.63 & 10.98 \\
     {\tt fkkp\_huff\_4} & .11 &  &  & .17 &  & .68 &  &  & .84 &  & .88 &  &  & 1.11 &  \\
     {\tt fkkp\_huff\_16} & .11 &  &  & .17 &  & 2.14 &  &  & 2.42 &  & 2.17 &  &  & 2.54 &  \\
     {\tt fkkp\_huff\_32} & .11 &  &  & .17 &  & 3.48 &  &  & 4.37 &  & 3.93 &  &  & 4.47 &  \\
     \midrule
     {\tt gmr} & .16 & 1.81 & 2.52 & .29 & 2.45 & .27 & .74 & .60 & .36 & .45 & .30 & .48 & .66 & .91 & .50 \\
     {\tt ap} & .34 & .62 & 1.22 & .57 & 1.07 & \underline{.16} & .23 & .48 & \underline{.31} & .39 & .68 & 1.23 & 1.50 & 2.75 & 3.38 \\
     {\tt rlmn} & .26 & .42 & .63 & .32 & .62 & .49 & .73 & 1.17 & .59 & 1.10 & .54 & .79 & .75 & 1.52 & 1.58 \\
     {\tt bcgpr} & \underline{.02} & \underline{.03} & .\underline{03} & \underline{.03} & \underline{.03} & .19
                        & \underline{.17} & \underline{.31} & .32 & \underline{.30} & \underline{.12} & \underline{.11} & \underline{.19} & \underline{.19} & \underline{.16} \\
     {\tt rle\_4} & .41 &  &  & .56 &  & .46 &  &  & .62 &  & .52 &  &  & .69 &  \\
     {\tt rle\_16} & .93 &  &  & 1.23 &  & .97 &  &  & 1.29 &  & 1.08 &  &  & 1.39 &  \\
     {\tt rle\_32} & 1.58 &  &  & 2.07 &  & 1.61 &  &  & 2.10 &  & 1.74 &  &  & 2.28 &  \\
     \bottomrule 	
   \end{tabular}
\end{center}
\label{tbl:queries}
\vspace*{-5ex}
\end{table}

Table~\ref{tbl:queries} shows the query time of individual access, rank, and select
queries. For each type of query, we executed 1,000,000 random queries, reporting
the median time of an individual query achieved over ten non-consecutive
executions. For access query, the best times were reached by {\tt bcgpr}, being 10
times faster. The explanation is that while {\tt bcgpr} performs an
access to a plain sequence during the access query, the other structures
need to perform an access over a succinct representation of a sequence. Notice
that our structure has similar query time than the other structures. For rank
query, the best times were reached by {\tt bcgpr} for large alphabets and by
{\tt ap} for small alphabets. For small alphabets our structure is at most $5$
times slower than the best competitor, using samplings steps of $4$. For large
alphabets, the difference increases to at most $9$ times. For the
case of select queries, the best times were reached by {\tt bcgpr}. In the worst
result, dataset {\tt wl\_2B}, our structure is 12 times slower. In the best
result, dataset {\tt wiki}, our structure is 3.4 times slower. In general, for larger sampling
steps, the query time of our structure increases. However, for larger sampling
steps the size of our structure decreases. Thus, for large enough sequences with
runs, our structure could maintain the sequences in main memory, meanwhile the
other structure should access the disk, increasing their query time.

According to our experimental study, the best trade-offs of our structure are
reached with the underlying structure {\tt ap} and samplings steps of 16 or 32,
if our focus is space, or with {\tt gmr} and samplings of 4 or 16, if our focus
is query time.

%



\addcontentsline{toc}{chapter}{References}
\setlength{\bibsep}{-1pt}
\bibliographystyle{abbrvnat}
\small
\bibliography{refs}

\end{document}